\setlist[enumerate, 1]{i)}
\newtheorem{theorem}{Theorem}[section]
\newtheorem{lemma}[theorem]{Lemma}
\newtheorem{proposition}[theorem]{Proposition}
\theoremstyle{remark}
\newtheorem{remark}{Remark}
\newcommand\numberthis{\addtocounter{equation}{1}\tag{\theequation}}
\newcommand{\N}{\mathbb{N}}
\newcommand{\Z}{\mathbb{Z}}
\newcommand{\R}{\mathbb{R}}
\newcommand{\C}{\mathbb{C}}
\newcommand{\Db}{\mathbb{D}}
\newcommand{\s}{\mathbb{S}}
\newcommand{\id}{\,\mathrm{d}}
\newcommand{\supp}{\operatorname{supp}}
\newcommand{\abs}[1]{\lvert#1\rvert}
\newcommand{\norm}[1]{\lVert#1\rVert}
\newcommand{\bvec}[1]{\boldsymbol{#1}}
\newcommand{\innorm}[1]{\left| #1 \right|_{\C^2}}
\newcommand{\dotprod}[2]{
  \bvec{#1}\mkern1mu{\cdot}\mkern1mu\bvec{#2} \,}
\newcommand{\inprodS}[2]{\left ( #1 , #2\right )_{\C^2}}  
\newcommand{\Scalprod}[2]{\left \langle #1 , #2\right \rangle }  
\newcommand{\Dom}[1]{\mathcal{D}( #1)}
\newcommand{\sigvec}{\boldsymbol{\sigma}}
\newcommand{\D}{D}
\newcommand{\T}{T}
\newcommand{\Trace}{\gamma}
\newcommand{\KC}{K}
\newcommand{\KCbar}{\overline{K}}
\newcommand{\Bound}{{\partial \Omega}}
\title[ ]{Self-adjointness of two-dimensional Dirac operators
 on domains} 
\author[Benguria]{Rafael D. Benguria}
\author[Fournais]{S\o ren Fournais} \author[Stockmeyer]{Edgardo
  Stockmeyer} \author[Van Den Bosch]{Hanne Van Den Bosch}
\address{ Rafael D. Benguria, Edgardo Stockmeyer and Hanne Van Den Bosch\\ Instituto de F\'\i sica\\
  Pontificia Universidad Cat\'olica de Chile\\
  Vicu\~na Mackenna 4860\\
  Santiago 7820436, Chile.}  \address{S\o ren Fournais, Department of
  Mathematics, Aarhus University, Ny Munkegade 118, DK-8000 Aarhus,
  Denmark}
\begin{document}

\begin{abstract}
We consider Dirac operators defined on planar domains. For a large
class of boundary conditions, we give a direct proof of their self-adjointness in the
Sobolev space $H^1$.
\end{abstract}

\maketitle

\section{Introduction} 
We consider a massless two-dimensional Dirac operator on a bounded domain
$\Omega \subset \R^2$ with $C^2$-boundary $\partial \Omega$.  Choosing
appropriate units, the Dirac operator acts as the differential
expression
\[
T:= -i \dotprod{\sigma}{} \nabla  = 
\begin{pmatrix}
                                        0 &1\\1&0
                                       \end{pmatrix} (-i \partial_1) 
                                       + \begin{pmatrix}
                                        0 &-i\\ i & 0
                                       \end{pmatrix} (-i \partial_2). 
\]
We denote by $\D_\eta$  the operator acting as $\T$
on functions in the domain 
\[
\Dom{\D_\eta} :=  \{ u \in H^1(\Omega, \C^2)\,|\, P_{-,\eta} \Trace u = 0\}.
\]
Here $\Trace$ is the trace operator on the boundary of $\Omega$ and
the orthogonal projections $P_{\pm, \eta}$ are defined as
\[
P_{\pm,\eta} = \frac12(1 \pm A_\eta ),      
\quad A_\eta = \cos(\eta) \dotprod{\sigma}{t}+ \sin(\eta) \sigma_3,
\]
where $\bvec t$ is  the unit vector tangent to the boundary and $\eta$ is
a real function on the boundary. 

In the physics literature operators of this type were first considered
in 1987 by Berry and Mondragon to study two-dimensional neutrino
billards \cite{BerryMondragon}. In recent years, they have gained
renewed interest due to their applications in the description of
graphene quantum dots and nano-ribbons (see
e.g. \cite{CastroNetoetAl,AkhmerovBeenakker,beneventano} and
references therein). The most commonly used boundary conditions are
those when $\eta\in\{0,\pi\}$  and $\eta\in\{\pi/2,3\pi/2\}$, known as
 infinite mass and zigzag boundary conditions, respectively.

Using integration by parts and the hermiticity of the Pauli matrices,
it is straightforward to check that $\D_\eta$ is a
symmetric operator. We have, for all $u, v \in H^1(\Omega, \C^2)$,
\begin{align*}
 \Scalprod{u}{\T v} &= \int_\Omega -i \inprodS{u}{\dotprod{\sigma}{} \nabla v } \\
	  & = \int_\Omega -i \nabla\cdot {}\inprodS{u}{\sigvec v} +i \int_\Omega \inprodS{\dotprod{\sigma}{} \nabla  u}{v}\numberthis \label{eq : ibp}\\
	  & = \Scalprod{\T u}{ v} - i \int_\Bound \inprodS{u}{\dotprod{n}{\sigma} v}, 
\end{align*}
where $\bvec n$ is the outward normal vector to $\Bound$.
If $u, v \in \Dom{\D_\eta}$, the boundary term cancels since the
anticommutation relations of the Pauli matrices imply 
\begin{equation}\label{referee1}
\{A_\eta,\dotprod{n}{\sigma} \} = 0.
\end{equation}

To determine when $\D_\eta$ is actually self-adjoint, in the case of
$C^\infty$-boundaries, one may adapt the corresponding theorems of
\cite{BoossLeschZhu} to our case (see for instance
\cite{Prokhorova2013}). However, the operators treated in
\cite{BoossLeschZhu} are more general and the proofs require
sophisticated techniques from the analysis of pseudodifferential
operators. Our proof, given in Section \ref{sec : s-a}, is simpler and
also works in cases with limited regularity of $\eta$ and $\partial
\Omega$.
\begin{theorem} \label{thm : self-adjoint}
Given $\Omega \subset \R^2$, bounded, with $C^2$-boundary, and $\eta \in C^1(\Bound)$, define $\D_\eta$ as above.
If $\cos \eta(s) \neq 0$ for all $s \in \Bound$, then $\D_\eta$ is self-adjoint on $\Dom{\D_\eta}$. 
\end{theorem}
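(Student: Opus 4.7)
The plan is to establish self-adjointness by directly proving the inclusion $\Dom{\D_\eta^*}\subseteq\Dom{\D_\eta}$. Since $\D_\eta$ is symmetric (as recalled above), this inclusion is equivalent to self-adjointness. Thus I fix $u\in\Dom{\D_\eta^*}$ with $\T u = f \in L^2(\Omega,\C^2)$ in the distributional sense, and must show that $u\in H^1(\Omega,\C^2)$ and $P_{-,\eta}\Trace u = 0$.

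Interior regularity is classical: the principal symbol $\dotprod{\sigma}{}\xi$ has eigenvalues $\pm\abs{\xi}$, so $\T$ is elliptic and $u\in H^1_{\mathrm{loc}}(\Omega)$ follows immediately.

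The main obstacle, and the heart of the proof, is regularity up to the boundary. I would work in a tubular neighbourhood of $\Bound$ with coordinates $(s,t)$, $s$ being an arclength parameter and $t\in[0,t_0)$ the normal distance, in which
\[
\T = -i\dotprod{\sigma}{n}\partial_t - \frac{i}{1+t\curv(s)}\dotprod{\sigma}{\bvec t}\partial_s + R,
\]
with $R$ a bounded zeroth-order matrix depending on curvature. The strategy is to show first that the tangential derivative $\partial_s u$ is locally in $L^2$; once this is established, the invertibility of $\dotprod{\sigma}{n}$ lets one solve the PDE algebraically for $\partial_t u \in L^2$ and conclude $u\in H^1$ up to the boundary. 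To control $\partial_s u$ I would employ a tangential difference quotient argument. A naive shift of $u$ in $s$ is incompatible with the boundary condition because $\eta$ varies along $\Bound$, so I would first introduce a smooth family of unitaries $U(s)$ on $\C^2$ diagonalising $A_{\eta(s)}$ pointwise (the hypothesis $\eta\in C^1$ is used here) and conjugate, thereby replacing the variable boundary condition by an $s$-independent one at the cost of a bounded zeroth-order perturbation of $\T$. After this gauge, cut-off tangential difference quotients of $u$ lie in the transformed domain, and testing $\Scalprod{u}{\T v}=\Scalprod{f}{v}$ against them yields an a priori bound $\norm{\partial_s u}_{L^2_{\mathrm{loc}}}\lesssim \norm{u}_{L^2}+\norm{f}_{L^2}$. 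I expect the hypothesis $\cos\eta\neq 0$ to enter decisively at the algebraic step: it guarantees that $\Ran P_{+,\eta(s)}$ is uniformly transverse to $\Ker\dotprod{\sigma}{n}$, so that no derivative is lost when recovering $\partial_t u$ and the boundary value problem is elliptic in the Shapiro--Lopatinskii sense.

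With $u\in H^1(\Omega,\C^2)$ in hand, $\Trace u$ is well defined in $H^{1/2}(\Bound)$ and the Green identity \eqref{eq : ibp} applies. Combining it with $\T u = f$ forces $\int_\Bound \inprodS{u}{\dotprod{n}{\sigma}v}=0$ for every $v\in\Dom{\D_\eta}$. By \eqref{referee1}, $\dotprod{n}{\sigma}$ swaps $\Ran P_{+,\eta}$ and $\Ran P_{-,\eta}$; as $\Trace v$ ranges over $\Ran P_{+,\eta}$, $\dotprod{n}{\sigma}\Trace v$ covers all of $\Ran P_{-,\eta}$, and the vanishing of the boundary pairing forces $P_{-,\eta}\Trace u = 0$. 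This completes the inclusion and hence the proof.
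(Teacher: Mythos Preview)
Your overall framework---prove $\Dom{\D^*}\subset\Dom{\D}$ by establishing $H^1$-regularity up to the boundary and then reading off $P_{-,\eta}\Trace u=0$ from the Green identity---matches the paper, and the interior and final steps are fine. The gap is in the boundary step. The Nirenberg tangential difference-quotient method works for second-order problems because one has a \emph{coercive} bilinear form: testing $a(u,v)=(f,v)$ with $v=D_s^{-h}D_s^h u$ and invoking $a(D_s^h u,D_s^h u)\gtrsim\|D_s^h u\|^2$ yields the bound. For the first-order Dirac operator no such form is available; the relation $\Scalprod{u}{\T v}=\Scalprod{f}{v}$ for $v\in\Dom{\D_\eta}$ is linear in the derivatives of $v$, and inserting a difference quotient of $u$ (even granting it lies in $\Dom{\D_\eta}$, which already presupposes $u\in H^1$) produces no control on $\|\partial_s u\|$. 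Your sentence ``testing \ldots against them yields an a priori bound'' is exactly the missing step. Your account of where $\cos\eta\neq0$ enters is also off: since $(\dotprod{\sigma}{n})^2=1$, one has $\Ker\dotprod{\sigma}{n}=\{0\}$, so transversality to it is vacuous and recovering $\partial_t u$ from $\partial_s u$ and $\T u$ is free for every $\eta$. The hypothesis must act in the tangential estimate itself, and naive difference quotients do not see it; this is precisely why first-order elliptic boundary regularity under Shapiro--Lopatinskii is normally proved via the Calder\'on projector or a boundary parametrix rather than by difference quotients.

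The paper carries out exactly that, in a form specific to two dimensions. It first shows that any $u\in\Dom{\D^*}$ has a trace in $H^{-1/2}(\Bound)$ already satisfying $P_{-,\eta}\Trace u=0$ there, then reduces via the Riemann map to the disc and uses the explicit Cauchy kernel $\KC$ and its conjugate $\KCbar$. The two key facts are that $\Trace S(\dotprod{\sigma}{n}\Trace u)\in H^{1/2}(\s)$ whenever $u,\T u\in L^2(\Db)$, and that the commutator $[\beta,\Trace\KC]$ gains half a derivative for $\beta\in C^1$. The boundary relation $\Trace u_2=\frac{1-\sin\eta}{\cos\eta}\,t\,\Trace u_1$, whose coefficient is nonvanishing exactly when $\cos\eta\neq0$, then lets one bootstrap the full trace from $H^{-1/2}$ to $L^2$ to $H^{1/2}$ in two passes. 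This is an explicit, low-regularity realisation of the Calder\'on-projector argument that your Shapiro--Lopatinskii intuition points toward but your difference-quotient scheme does not deliver.
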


\begin{remark}
Our proof of self-adjointness is really an elliptic regularity result for the Dirac system. We implicitly establish the following inequality: 

Suppose that $\Omega$ and $\eta$ satisfy the conditions of Theorem~\ref{thm : self-adjoint}. Then there exists a constant $C>0$ such that
\begin{align}
\| u \|_{H^1(\Omega)} \leq C \left( \|u \|_{L^2(\Omega)} + \|Tu \|_{L^2(\Omega)} \right),
\end{align}
for all $u \in L^2(\Omega,\C^2)$ satisfying the boundary condition $P_{-,\eta} \Trace u =0$.
Notice that we establish below that the boundary trace $\Trace u$
exists in $H^{-1/2}(\Bound)$ if $u, Tu \in L^2(\Omega,\C^2)$. 
\end{remark}

\begin{remark}
 We do not know whether the hypothesis $\cos \eta (s) \neq 0$ is optimal, but it can not be relaxed much. If $\D_\eta$ is self-adjoint on a domain contained in $H^1(\Omega, \C^2)$,
it follows from the compact embedding of $H^1(\Omega) \subset L^2(\Omega)$ that its resolvent is compact.
Thus, the spectrum of $\D_\eta$ consists of eigenvalues of finite
multiplicity accumulating only at $\pm \infty$. This is to be
contrasted with the case of zigzag boundary conditions, $\cos\eta=0$,
which has $0$ in the essential spectrum. In particular, the
corresponding operator is not self-adjoint on a domain included in
$H^1(\Omega, \C^2)$ (see \cite{schmidt1994,freitassiegl2014}).
More generally, we show in the appendix that, if  $\cos \eta (s)$ tends to zero at least quadratically when $s \to s_0 \in \Bound$, 
$0 \in \sigma_\mathrm{ess}(\D_\eta)$.
\end{remark}

The rest of the paper presents the proof of Theorem~\ref{thm : self-adjoint}. 
Our strategy is to show directly that $\Dom{D_\eta^*} \subset \Dom{D_\eta}$.
The difficult part is showing the inclusion $\Dom{D_\eta^*}\subset H^1(\Omega, \C^2)$, 
for which it is necessary to prove the regularity of the boundary values of functions in $\Dom{D_\eta^*}$.
This step exploits the interplay between the projections giving the boundary conditions and the special structure of the Dirac operator.
We first establish the necessary results when $\Omega = \mathbb{D}$, the unit disc.
Finally, the Riemann mapping theorem allows to treat the general case as well.

\section{Self-adjointness} \label{sec : s-a} We first fix some
notations. We  work with spaces of $\C^2$-valued functions such as
$H^1( \Omega, \C^2), C^{\infty}(\Omega, \C^2),\ldots$.  For shortness
of notation, we  often omit the $\C^2$ and just write
$H^1(\Omega), C^{\infty}(\Omega), \ldots$ when no possible confusion
occurs.  We will consider a fixed domain $\Omega$ with
$C^2$-boundary $\Bound$.  We denote by $\bvec{n}(s)$ and $\bvec{t}(s)$
the outward normal and the tangent vector to the boundary at the point
$s \in \Bound$, chosen such that $\bvec n, \bvec t$ is positively
oriented.  If $\bvec t (s) = (t_1(s), t_2(s))$, we define $t(s) =
t_1(s)+ i t_2 (s)$, the tangent vector seen as a number in $\C$.
Associated to the domain $\Omega$ we have the trace operator at the
boundary $\Trace : C^1(\overline{\Omega}) \to C^1(\Bound)$, and an
extension operator $E : C^1(\Bound) \to C^1(\overline{\Omega}) $.  We
recall that $\Trace$ extends to a bounded operator from
$H^{s+1/2}(\Omega)$ to $H^{s}(\Bound)$, and $E$ from $H^{s}(\Bound)$
to $H^{s+1/2}(\Omega)$ for all $s \in (0,2)$.  We denote by
$\mathscr{D}'(\Omega)$ the space of distributions, i.e., the dual of
$C^{\infty}_0(\Omega)$.

We will also consider a fixed function $\eta$ defining the boundary conditions and write simply $D$ for $D_\eta$.

In passing, we recall our definition for the Pauli matrices
\[
\sigma_1 =\begin{pmatrix}
 0&1\\1&0
\end{pmatrix},
\quad
\sigma_2 = \begin{pmatrix}
 0&-i\\i&0
\end{pmatrix},
\quad
\sigma_3 = \begin{pmatrix}
 1&0\\0&-1
\end{pmatrix}.
\]
They satisfy the (anti)commutation relations
 \[
 \{\sigma_j, \sigma_k \} = 2\delta_{jk}, \quad 
 [\sigma_j, \sigma_k]  =  2 i \epsilon_{jkl} \sigma_l, \quad j,k,l \in \{ 1,2,3 \},
 \]
where $\delta_{jk}$ is the Kronecker delta and $\epsilon_{jkl}$ is the Levi-Civita symbol, 
which is totally antisymmetric and normalized by $\epsilon_{123} = 1$.

\subsection{General considerations}
First, we will need some regularity properties of $v \in \Dom{\D^* } $.
\begin{lemma}\label{lemmasf}
  Let $\mathcal{K}:=\{u\in L^2(\Omega,\C^2)\,|\, Tu\in
  L^2(\Omega,\C^2)\}$ equipped with the graph-norm
  $\|u\|_{\mathcal{K}}^2=\|u\|^2+\|Tu\|^2,$ where $T$ acts as a
  differential operator on distributions in $\Omega$. Then
  $\mathcal{K}$ is a Hilbert space and
  $C^\infty(\overline{\Omega},\C^2)$ is dense in $\mathcal{K}$.
\end{lemma}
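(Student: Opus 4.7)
The plan is to handle the two assertions separately. For completeness of $\mathcal{K}$ in the graph norm, I would take a Cauchy sequence $(u_n) \subset \mathcal{K}$ with $u_n \to u$ and $Tu_n \to v$ in $L^2(\Omega,\C^2)$, and test against $\varphi \in C_0^\infty(\Omega,\C^2)$. The symmetry of $T$ on compactly supported smooth functions gives $\langle Tu_n, \varphi\rangle = \langle u_n, T\varphi\rangle$, and passing to the limit yields $\langle v, \varphi\rangle = \langle u, T\varphi\rangle$. Hence $Tu = v$ as distributions, so $u \in \mathcal{K}$ and $u_n \to u$ in the graph norm, showing $\mathcal{K}$ is complete.

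For density, I would follow the classical partition--shift--mollify scheme. Take a finite open cover of $\overline{\Omega}$ consisting of one set $U_0 \Subset \Omega$ and finitely many boundary patches $U_1,\ldots,U_N$, with a subordinate partition of unity $\{\chi_j\}$. Because $T$ is a first-order differential operator, $T(\chi_j u) = \chi_j Tu - i(\sigvec\cdot \nabla \chi_j) u \in L^2$, so $\chi_j u \in \mathcal{K}$ and it suffices to approximate each piece. For the interior piece $\chi_0 u$, standard Friedrichs mollification works: since $T$ has constant coefficients, $T((\chi_0 u) \ast \rho_\delta) = (T(\chi_0 u)) \ast \rho_\delta$, and both the function and its image under $T$ converge in $L^2$ as $\delta \downarrow 0$.

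For a boundary patch, I would shrink $U_j$ if necessary so that a single vector $\nu_j \in \R^2$ is strictly inward-pointing at every boundary point of $U_j \cap \partial\Omega$; this uses continuity of the normal on the $C^2$ boundary. For small $\epsilon > 0$, the shifted function $v_j^\epsilon(x) := (\chi_j u)(x + \epsilon \nu_j)$ is then well-defined on an open neighborhood of $\overline{\Omega} \cap U_j'$ for a slightly shrunken $U_j' \Subset U_j$. Translation invariance of $T$ gives $Tv_j^\epsilon(x) = T(\chi_j u)(x + \epsilon \nu_j)$, and $L^2$-continuity of translation yields $v_j^\epsilon \to \chi_j u$ and $Tv_j^\epsilon \to T(\chi_j u)$ in $L^2$ as $\epsilon \downarrow 0$. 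Mollifying $v_j^\epsilon$ with $\rho_\delta$ for $\delta$ small relative to $\epsilon$ then produces a function smooth up to $\partial\Omega$, and a diagonal argument in $(\epsilon,\delta)$ completes the density claim. The main technical point is precisely this boundary step: the cover and inward vectors must be chosen carefully so that the shifted-and-mollified approximations extend smoothly across $\partial \Omega$. The constant-coefficient structure of $T$ is what makes this possible without invoking Friedrichs' commutator lemma.
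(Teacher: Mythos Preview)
Your completeness argument is essentially identical to the paper's. For density, however, the paper takes a genuinely different route: rather than constructing approximants directly, it argues by orthogonality. Assuming some $v \in \mathcal{K}$ satisfies $\langle v,u\rangle + \langle Tv,Tu\rangle = 0$ for all $u \in C^\infty(\overline{\Omega})$, one sets $w := Tv$, observes that $Tw = -v$ in $\mathscr{D}'(\Omega)$, and then---this is the key step---shows that the zero extensions $\widetilde{v},\widetilde{w}$ to $\R^2$ still satisfy $T\widetilde{w} = -\widetilde{v}$ in $\mathscr{D}'(\R^2)$ (here the orthogonality against \emph{all} of $C^\infty(\overline{\Omega})$, not just $C^\infty_0(\Omega)$, is used). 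Ellipticity of $T$ on the full plane then gives $\widetilde{w} \in H^1(\R^2)$, hence $w \in H^1_0(\Omega)$, and an integration by parts against an arbitrary $u \in \mathcal{K}$ forces $v = 0$.

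Your partition--shift--mollify scheme is the classical Meyers--Serrin style argument and is perfectly correct here; it is more elementary and constructive, and would work under even weaker boundary regularity (Lipschitz suffices for the inward-vector geometry). The paper's argument, by contrast, is slicker and avoids patching altogether, at the cost of invoking interior elliptic regularity on $\R^2$ and the characterization of $H^1_0$ via zero extension. Both approaches rely on the constant-coefficient, first-order nature of $T$: you use it to commute $T$ with translation and mollification, while the paper uses it to pass to the whole plane and apply global ellipticity.
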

\begin{proof}
First we show that $\mathcal{K}$ is  a Hilbert space. Take a Cauchy sequence
$(u_n)_{n\in\N}\subset \mathcal{K}$ with $u_n\to u$ and $Tu_n\to v$ in
$L^2$. We have for any test function $\varphi\in C^\infty_0(\Omega)$ 
\begin{align*}
  Tu[\varphi]&=u[T\varphi]=\lim_{n\to\infty}\Scalprod{u_n}{T\varphi}=
\lim_{n\to\infty}Tu_n[\varphi]=\lim_{n\to\infty}\Scalprod{Tu_n}{\varphi}=
\Scalprod{v}{\varphi}.
\end{align*}
Therefore, $Tu=v$ and in particular $u\in \mathcal{K}$.

Recall that by definition
$u \in C^{\infty}(\overline{\Omega})$ iff $u$ is the restriction to
$\Omega$ of a smooth function (spinor) on ${\mathbb R}^2$.
To prove the density of $C^{\infty}(\overline{\Omega})$ it suffices to
show that if
\begin{align}\label{gg1}
\Scalprod{v}{u}_\mathcal{K}=\Scalprod{v}{u}+\Scalprod{Tv}{Tu}=0,
\end{align}
for all $u \in C^{\infty}(\overline{\Omega})$ then $v$ vanishes.
Let $w:=Tv \in L^2(\Omega)$. It follows from \eqref{gg1} that 
\begin{align}\label{gg2}
Tw=-v\quad \mbox{in}\quad \mathscr{D}'(\Omega).
\end{align}
 Define $\widetilde{v}$ and $\widetilde{w}$ as the extensions
by zero to $L^2(\R^2)$ of $v$ and $w$, respectively. For any
$\varphi\in C^\infty_0(\R^2)$ we calculate using \eqref{gg1}
\begin{align*}
  T\widetilde{w}[\varphi]=
  \Scalprod{\widetilde{w}}{T\varphi}_{L^2(\R^2)}=
 \Scalprod{{w}}{T\varphi}_{L^2(\Omega)}=
   \Scalprod{-v}{\varphi}_{L^2(\Omega)}
=\Scalprod{-\widetilde{v}}{\varphi}_{L^2(\R^2)}.
\end{align*}
Therefore, $T\widetilde{w}=-\widetilde{v}\in L^2(\R^2)$. By
ellipticity we find that $\widetilde{w}\in H^1(\R^2)$. Moreover, using
\cite[Proposition IX.18]{brezisbook} we get that ${w}\in
H^1_0(\Omega)$.

Let $(\varphi_n)_{n\in\N}\subset C_0^\infty(\Omega)$ be a sequence
with $\varphi_n\to w$ in $H^1(\Omega)$. For any $u\in \mathcal{K}$ 
\begin{align*}
  \Scalprod{v}{u}_\mathcal{K}&=\Scalprod{v}{u}_{L^2(\Omega)}+\Scalprod{w}{Tu}_{L^2(\Omega)}\\
&=\Scalprod{v}{u}_{L^2(\Omega)}+\lim_{n\to\infty}
\Scalprod{T\varphi_n}{u}_{L^2(\Omega)}\\
&=\Scalprod{v}{u}_{L^2(\Omega)}+
\Scalprod{Tw}{u}_{L^2(\Omega)}=0,
\end{align*}
where the last equality follows from \eqref{gg2} and implies that $v=0$.
\end{proof}
\begin{lemma}\label{lemma : H1loc}
 We have that $\mathcal{D}(\D^*)\subset \mathcal{K}$. Moreover, 
$\mathcal{K}\subset H^1_{\rm loc}(\Omega, \C^2)$.
\end{lemma}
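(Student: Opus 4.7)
The statement has two parts, and I would handle them separately.

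For the first inclusion $\mathcal{D}(D^*) \subset \mathcal{K}$, I would unwind the definition of the adjoint. If $v \in \mathcal{D}(D^*)$, then there exists $w \in L^2(\Omega,\C^2)$ such that $\Scalprod{v}{Du}_{L^2} = \Scalprod{w}{u}_{L^2}$ for every $u \in \mathcal{D}(D_\eta)$. The key observation is that $C^\infty_0(\Omega,\C^2) \subset \mathcal{D}(D_\eta)$, since any function with compact support in $\Omega$ trivially satisfies the boundary condition $P_{-,\eta}\Trace u = 0$. Testing the adjoint identity against $u \in C^\infty_0(\Omega,\C^2)$ therefore gives
\[
Tv[u] = \Scalprod{v}{Tu}_{L^2} = \Scalprod{w}{u}_{L^2},
\]
so $Tv = w$ in $\mathscr{D}'(\Omega)$, which is actually an $L^2$-function. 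Hence $v \in \mathcal{K}$.

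For the second inclusion $\mathcal{K} \subset H^1_{\rm loc}(\Omega,\C^2)$, the idea is to invoke interior elliptic regularity for the Dirac operator via a cut-off and extension-by-zero argument analogous to the one already used in the proof of Lemma~\ref{lemmasf}. Given $u \in \mathcal{K}$ and an arbitrary compact $K \Subset \Omega$, I would pick $\chi \in C^\infty_0(\Omega)$ with $\chi \equiv 1$ on a neighborhood of $K$. Then $\chi u$, extended by zero, lies in $L^2(\R^2,\C^2)$, and a direct computation of the commutator yields
\[
T(\chi u) = \chi\, Tu - i(\sigvec \cdot \nabla \chi)\, u \in L^2(\R^2,\C^2),
\]
since $Tu \in L^2(\Omega,\C^2)$ and $\nabla \chi$ is bounded and compactly supported in $\Omega$.

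Now I would use the ellipticity of $T$ on $\R^2$: since $T^2 = -\Delta\, \mathbf{1}_{\C^2}$, the operator $T$ is elliptic of order one, so $T(\chi u) \in L^2(\R^2)$ combined with $\chi u \in L^2(\R^2)$ forces $\chi u \in H^1(\R^2,\C^2)$ (one can see this by Fourier transform: $(1+|\xi|)|\widehat{\chi u}|$ is controlled by $|\widehat{\chi u}| + |\sigvec \cdot \xi\, \widehat{\chi u}|$, which is in $L^2$). Restricting to $K$ shows $u \in H^1(K,\C^2)$, and since $K$ was arbitrary, $u \in H^1_{\rm loc}(\Omega,\C^2)$.

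Neither step presents a genuine obstacle: the first is purely formal, and the second is the standard interior elliptic regularity estimate, which works painlessly here because the Dirac operator squares to the (componentwise) Laplacian. The real difficulty of the paper, namely obtaining $H^1$-regularity \emph{up to the boundary} for elements of $\mathcal{D}(D^*)$, is postponed to subsequent sections where the boundary condition must actually be exploited.
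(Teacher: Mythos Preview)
Your argument is correct. The first part coincides verbatim with the paper's proof. For the second part you take a slightly different but equally standard route: you cut off, extend by zero, and invoke global ellipticity of $T$ on $\R^2$ via the Fourier transform (using $(\sigvec\cdot\xi)^2=|\xi|^2$). The paper instead appeals to the density of $C^\infty(\overline{\Omega})$ in $\mathcal{K}$ established in Lemma~\ref{lemmasf}, picks an approximating sequence $(v_n)$, and uses the identity $\|Tu\|=\|\nabla u\|$ for $u\in C^\infty_0(\Omega)$ (a consequence of \eqref{eq : ibp}) to show that $\nabla v_n$ is Cauchy in $L^2$ on any compactly contained open set. Your version has the minor advantage of not relying on Lemma~\ref{lemmasf}; the paper's version makes the equality $\|Tu\|=\|\nabla u\|$ explicit, which is morally the same ellipticity statement you use on the Fourier side. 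Either way, as you note, the real work lies in the boundary regularity treated later.
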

\begin{proof}
  Fix $v \in \Dom{\D^*}$ and define $\tilde v:=\D^* v \in L^2
  (\Omega)$. By definition $\T v$ is a distribution, thus for any $u
  \in C^\infty_0(\Omega)$
 \[
 \T v [u] \equiv \Scalprod{v}{ \T u} = \Scalprod{v}{ \D u} = \Scalprod{\tilde v}{ u},
 \]
 since $C^\infty_0(\Omega) \subset \Dom{\D}$.
 This shows that  the distribution $\T v$ can be identified with the $L^2$-function
 $\D^* v$ and thus $v\in \mathcal{K}$.
 
 Let now $v\in \mathcal{K}$.  By Lemma \ref{lemmasf} we may choose a
 sequence of $C^\infty(\Omega)$-functions $(v_n)_{n\in\N}$ that
 converges to $v$ in $L^2(\Omega)$ and such that $\T v_n$ converges to
 $\T v$ in $L^2(\Omega)$.

 Fix an open set $A$ such that $\bar A \subset \Omega$. 
 We will show that $\nabla v_n$ converges in $L^2 (A)$.
 Take a cut-off function $\chi_A \in C^\infty_0 (\Omega)$ such that $\chi_A = 1$ on $A$.
By equation \eqref{eq : ibp} we have, for all $u  \in
C^\infty_0(\Omega)$, that $\norm {Tu} = \norm{\nabla u}$. Thus we can bound
  \begin{align*}
  \int_A \innorm{\nabla (v_n- v_m)}^2 & \leq  \int_\Omega \innorm{\nabla \chi_A  (v_n-v_m)}^2 \\
			    & = \int_\Omega \innorm{\T \chi_A  (v_n-v_m)}^2\\
			    & \leq \|\nabla \chi_A\|_\infty^2 \norm{v_n-v_m}^2 + \norm{\T  (v_n-v_m)}^2.
\end{align*}
  This finishes the proof.
\end{proof}

By Lemma~\ref{lemma : H1loc} the difficult part in proving the inclusion $\Dom{\D^*}
\subset \Dom{\D}$ is to show regularity of $v\in \Dom{\D^*}$ up to the
boundary.  To do so it is sufficient to prove that $v$ has a
sufficiently regular trace on the boundary $\partial \Omega$.  First
we show that traces exist  as distributions.

\begin{lemma} \label{lemma : H1/2trace} The trace $\Trace$ extends to
  a continuous map \[\Trace: \mathcal{K}\to H^{-1/2}(\partial\Omega,
  \C^2).\] Moreover, if $v \in \Dom{\D^* }$ then $P_- \Trace v = 0$.
  An equivalent formulation of this is that $\Trace v_2 = \frac{1-\sin(\eta)}{\cos(\eta)} t \Trace v_1$.
 \end{lemma}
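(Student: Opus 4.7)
The plan is to proceed in three steps: extend $\Trace$ to $\mathcal{K}$ by duality, derive $P_{-,\eta}\Trace v = 0$ from the adjoint identity via a generalized Green formula, and unpack the resulting condition in components.

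For the first step I would rewrite \eqref{eq : ibp} as
\[
-i\int_{\Bound}\inprodS{\Trace u}{\dotprod{n}{\sigma}\,\Trace V} = \Scalprod{u}{TV}-\Scalprod{Tu}{V},\qquad u\in C^\infty(\overline{\Omega}),\ V\in H^1(\Omega,\C^2).
\]
Given $\phi\in H^{1/2}(\Bound,\C^2)$, choose $V:=E(\dotprod{n}{\sigma}\,\phi)\in H^1(\Omega,\C^2)$; since $(\dotprod{n}{\sigma})^2=1$, multiplication by $\dotprod{n}{\sigma}$ is an isomorphism of $H^{1/2}(\Bound)$, so this is well-defined with $\|V\|_{H^1}\leq C\|\phi\|_{H^{1/2}}$. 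Then $\Trace V=\dotprod{n}{\sigma}\,\phi$, so the left-hand side reduces to $-i\int_{\Bound}\inprodS{\Trace u}{\phi}$, while the right-hand side is bounded by $C\|u\|_\mathcal{K}\|\phi\|_{H^{1/2}}$. Thus $\Trace u$ defines a continuous antilinear functional on $H^{1/2}(\Bound,\C^2)$, i.e.\ an element of $H^{-1/2}(\Bound,\C^2)$, of norm controlled by $\|u\|_\mathcal{K}$. Density of $C^\infty(\overline{\Omega})$ in $\mathcal{K}$ (Lemma~\ref{lemmasf}) then extends $\Trace$ to all of $\mathcal{K}$.

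For the second step, passing the above identity to the limit gives the generalized Green formula
\[
\Scalprod{Tu}{v}-\Scalprod{u}{Tv} = i\,\Scalprod{\Trace u}{\dotprod{n}{\sigma}\,\Trace v}_{H^{1/2},H^{-1/2}}
\]
for every $u\in H^1(\Omega,\C^2)$ and $v\in\mathcal{K}$. Now fix $v\in\Dom{D^*}$; Lemma~\ref{lemma : H1loc} gives $D^*v=Tv$ in $L^2$, so for any $u\in\Dom{D}$ the left-hand side vanishes. As $u$ ranges over $\Dom{D}$, $\Trace u$ exhausts $P_{+,\eta}H^{1/2}(\Bound,\C^2)$: any $\psi$ in this space lifts to $u:=E\psi\in\Dom{D}$. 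Using self-adjointness of $P_{+,\eta}$ in the duality pairing (valid because $\eta\in C^1$), the condition becomes $\Scalprod{\phi}{P_{+,\eta}\,\dotprod{n}{\sigma}\,\Trace v}_{H^{1/2},H^{-1/2}}=0$ for all $\phi\in H^{1/2}$. The anticommutation \eqref{referee1} rewrites $P_{+,\eta}\,\dotprod{n}{\sigma}=\dotprod{n}{\sigma}\,P_{-,\eta}$, so $\dotprod{n}{\sigma}\,P_{-,\eta}\Trace v=0$ in $H^{-1/2}$, and invertibility of $\dotprod{n}{\sigma}$ gives $P_{-,\eta}\Trace v=0$.

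For the final step, writing $\dotprod{t}{\sigma}=\bigl(\begin{smallmatrix}0&\bar t\\t&0\end{smallmatrix}\bigr)$ yields $A_\eta=\bigl(\begin{smallmatrix}\sin\eta&\cos\eta\,\bar t\\ \cos\eta\,t&-\sin\eta\end{smallmatrix}\bigr)$, and the second row of $P_{-,\eta}\Trace v=0$ reads $(1+\sin\eta)\Trace v_2=\cos\eta\,t\,\Trace v_1$; multiplying numerator and denominator by $1-\sin\eta$ (using $\cos^2\eta=1-\sin^2\eta$, legitimate because $\cos\eta\neq 0$) gives the stated formula. The main technical point I anticipate is the careful justification of the Green formula in Step~2 (passage from the classical surface integral to the $H^{1/2}$--$H^{-1/2}$ pairing when one factor lives only in $\mathcal{K}$) together with the surjectivity $\Trace(\Dom{D})=P_{+,\eta}H^{1/2}(\Bound,\C^2)$; after that the conclusion is projector algebra driven by \eqref{referee1}.
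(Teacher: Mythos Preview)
Your proposal is correct and follows essentially the same route as the paper: extend the trace to $\mathcal{K}$ by pairing the integration-by-parts identity \eqref{eq : ibp} against extensions $E\phi$ and using the density from Lemma~\ref{lemmasf}, then for $v\in\Dom{D^*}$ test against $u\in\Dom{D}$ and use the anticommutation \eqref{referee1} to convert $P_{+,\eta}\dotprod{\sigma}{n}$ into $\dotprod{\sigma}{n}P_{-,\eta}$. Your write-up is somewhat more explicit than the paper's (you spell out the surjectivity $\Trace(\Dom{D})=P_{+,\eta}H^{1/2}(\Bound)$ and the component calculation, which the paper leaves to the reader), but the ideas are identical.
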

\begin{proof}
Let $v\in \mathcal{K}$ and let  $(v_n)_{n\in\N}$  be a  $C^\infty(\overline{\Omega})$-sequence
approximating $v$ in the $\mathcal{K}$-norm.
We will show that the traces $\Trace v_n$ of the $v_n$'s
  converge in $H^{-1/2}(\Bound)$. 

  Fix $f \in C^\infty (\Bound )$.  By \cite[Theorem 7.53]{bookAdams}
  it is possible to extend $f$ to a regular function $u \equiv E f$ on
  $\Omega$ satisfying $\Trace u = f$ with $\norm{u}_{H^1( \Omega)} \leq
  C_E \norm{f}_{H^{1/2}( \Bound)}$, with $C_E$ only depending on
  $\Omega$. By the same
  calculation as in \eqref{eq : ibp},
\[
i \int_\Bound (\Trace v_n,\dotprod{\sigma}{n} f) = 
 \Scalprod{\T v_n}{u} - \Scalprod{v_n}{ \T u}.
\]
This shows
\begin{align*}
 \abs{\Scalprod{\Trace(v_n-v_m)}{\dotprod{\sigma}{n} f}_\Bound} 
	      &\leq \norm{\T (v_n-v_m)} \norm{u} + \norm{v_n-v_m} \norm{\nabla u} \\
	      & \leq \bigl( \norm{\T (v_n-v_m)}+ \norm{v_n-v_m}\bigr) \norm{u}_{H^1( \Omega)} \\
	      & \leq C_E \bigl( \norm{\T (v_n-v_m)}+ \norm{v_n-v_m} \bigr) \norm{f}_{H^{1/2}( \Bound)}.
 \end{align*}
 This in turn proves that the limit $\dotprod{\sigma}{n} \Trace v$ of
 $\dotprod{\sigma}{n}\Trace v_n$ exists in $H^{-1/2}(\Bound)$.  Since
 $\dotprod{\sigma}{n}$ is a pointwise invertible matrix (in fact
 $(\dotprod{\sigma}{n})^2=1$) with $C^1$-entries, the same conclusion
 holds for $\Trace v$.

 Assume now that $v\in \mathcal{D}(\D^*)$ and that $u \in \Dom{D}$,
 then  $f:=\Trace u= P_+ f$ and
\[
i \int_\Bound \inprodS{\Trace v}{\dotprod{\sigma}{n} f} = 
\int_\Omega \inprodS{\T v}{u} - \inprodS{v}{ D u} = \Scalprod{D^* v}{ u} - \Scalprod{v}{ D u} = 0.
\]
In addition, using \eqref{referee1} we have that $P_+
\dotprod{\sigma}{n}= \dotprod{\sigma}{n} P_-$. Then, $\inprodS{\Trace
  v}{\dotprod{\sigma}{n} f} =\inprodS{\Trace v}{\dotprod{\sigma}{n}
  P_+f}= \inprodS{\Trace v}{P_-\dotprod{\sigma}{n}
  f}=\inprodS{P_-\Trace v}{\dotprod{\sigma}{n} f}$. Thus, we have
shown that $P_-\Trace v=0$. This finishes the proof.
\end{proof}

The next lemma shows that improving the regularity of the traces is all that is left to do.

\begin{lemma} \label{lemma : boundary->bulk}
 If $v \in \mathcal{K}$ and $\Trace v \in H^{1/2}(\Bound, \C^2)$,
 then $v \in H^1 (\Omega, \C^2)$.
\end{lemma}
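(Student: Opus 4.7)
The plan is to reduce to the case of zero trace via a subtraction and then use ellipticity of $T$ on the whole plane after a zero-extension.

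First I would lift the boundary data to an $H^1$ function: set $w := E(\Trace v)$, which belongs to $H^1(\Omega, \C^2)$ by the extension estimate $\|w\|_{H^1(\Omega)}\le C_E\|\Trace v\|_{H^{1/2}(\Bound)}$, and satisfies $\Trace w = \Trace v$. In particular $w\in \mathcal{K}$, so $u := v-w$ belongs to $\mathcal{K}$ as well (with $Tu = Tv - Tw \in L^2$) and has trace $\Trace u = 0$ in $H^{-1/2}(\Bound)$ by Lemma~\ref{lemma : H1/2trace}. Once we show $u\in H^1(\Omega)$, the conclusion $v=u+w\in H^1(\Omega)$ follows.

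Next I would extend $u$ by zero to $\widetilde u\in L^2(\R^2, \C^2)$ and show that $T\widetilde u$ is the zero extension of $Tu$, hence lies in $L^2(\R^2)$. For any $\varphi\in C^\infty_0(\R^2)$, approximate $u$ in the $\mathcal{K}$-norm by a sequence $(u_n)\subset C^\infty(\overline\Omega)$ furnished by Lemma~\ref{lemmasf}. The integration by parts identity \eqref{eq : ibp} gives
\[
\Scalprod{u_n}{T\varphi}_{L^2(\Omega)} = \Scalprod{Tu_n}{\varphi}_{L^2(\Omega)} - i\int_\Bound \inprodS{\Trace u_n}{\dotprod{\sigma}{n}\Trace\varphi}.
\]
Passing to the limit, the boundary term converges to the $H^{-1/2}$–$H^{1/2}$ pairing of $\Trace u=0$ with $\dotprod{\sigma}{n}\Trace\varphi$, which vanishes. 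Hence $T\widetilde u[\varphi] = \Scalprod{\widetilde{Tu}}{\varphi}_{L^2(\R^2)}$, proving $T\widetilde u=\widetilde{Tu}\in L^2(\R^2)$.

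Finally I would invoke the ellipticity of $T$ on $\R^2$. Since $\widehat{T\widetilde u}(\xi) = (\dotprod{\sigma}{\xi})\widehat{\widetilde u}(\xi)$ and $(\dotprod{\sigma}{\xi})^2=|\xi|^2 I$, the identity $\||\xi|\widehat{\widetilde u}\|_{L^2}=\|T\widetilde u\|_{L^2}<\infty$ yields $\widetilde u\in H^1(\R^2)$. Restricting to $\Omega$ gives $u\in H^1(\Omega)$, and therefore $v\in H^1(\Omega,\C^2)$. The only delicate point is the justification of the vanishing of the boundary term in the limit, which requires using that $\Trace u=0$ holds in the distributional sense of Lemma~\ref{lemma : H1/2trace} together with the continuity of the trace on $\mathcal{K}$; once this is in place the rest is routine Fourier-based ellipticity.
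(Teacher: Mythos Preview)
Your proof is correct and follows essentially the same approach as the paper: reduce to the zero-trace case by subtracting $E(\Trace v)$, use the density of $C^\infty(\overline{\Omega})$ in $\mathcal{K}$ together with the continuity of the trace $\mathcal{K}\to H^{-1/2}(\Bound)$ to justify the vanishing of the boundary term, extend by zero to $\R^2$, and invoke ellipticity of $T$. The only cosmetic difference is that you spell out the Fourier computation for the final $H^1$-regularity step, whereas the paper simply asserts it.
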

\begin{proof}
Let $v \in \mathcal{K}$ with  $\Trace v \in H^{1/2}(\Bound)$. By
replacing $v$ by $v-E(\Trace(v))$, where  
 $E : H^{1/2}(\Bound) \mapsto H^1(\Omega)$ is the (continuous)
 extension operator, it suffices to consider the case when $\Trace
 v=0$. 

Write $w:=Tv\in L^2(\Omega)$. Next we show that
\begin{align}\label{gg3}
  \Scalprod{v}{T\varphi}=\Scalprod{w}{\varphi}, \quad \mbox{for
    all}\quad
\varphi\in C^\infty(\overline{\Omega}, \C^2).
\end{align}
Let $(v_n)_{n\in\N}$ be a  $C^\infty(\overline{\Omega})$-sequence
approximating $v$ in the $\mathcal{K}$-norm. Then by Lemma~\ref{lemma
  : H1/2trace}, $\Trace v_n\to\Trace v=0$ in
$H^{-1/2}(\partial\Omega)$. We calculate for $\varphi\in
C^\infty(\overline{\Omega})$ using \eqref{eq : ibp}
\begin{align*}
  \Scalprod{v}{T\varphi}&=\lim_{n\to\infty}
  \Scalprod{v_n}{T\varphi}=\lim_{n\to \infty}
  \left(\Scalprod{Tv_n}{\varphi}-i\int_{\partial\Omega}
    \inprodS{\Trace v_n}{\dotprod{n}{\sigma} \Trace \varphi}\right)\\
&=\Scalprod{w}{\varphi},
\end{align*}
where the boundary term vanishes since $\Trace\varphi\in
H^{1/2}(\partial\Omega)$. This proves \eqref{gg3}.

Let $\widetilde{v}$ and $\widetilde{w}$ be the extensions by zero  to $L^2(\R^2)$ of
$v$ and $w$, respectively.  Then, by \eqref{gg3} 
\begin{align}
  \label{gg4}
  T\widetilde{v}=\widetilde{w},\quad\mbox{in}\quad \mathscr{D}'(\R^2).
\end{align}
 From this we conclude that $\widetilde{v}\in H^1(\R^2)$ and thus
 $v\in H^1(\Omega)$. This finishes the proof.
 \end{proof}

In order to take advantage of the special structure of the Dirac operator,
it will be convenient to identify $\bvec x \in \R^2$ with the complex number $z = x_1 + i x_2$.
In this notation, the Dirac operator reads
\[
T u (z) = -2 i \begin{pmatrix}
           0 & \partial_z \\ \partial_{z^*} & 0
          \end{pmatrix} u (z) = -2i
          \begin{pmatrix}
           \partial_{z} u_2 (z) \\ \partial_{z^*} u_1 (z)
          \end{pmatrix}          ,
\]
where we introduced the Cauchy-Riemann derivatives 
$\partial_z := \frac{1}{2}(\partial_1 -i \partial_2)$ and $\partial_{z^*} :=  \frac{1}{2}(\partial_1 +i \partial_2)$.
In addition, we introduce the Cauchy kernel
\[
(\KC f) (\zeta)= \frac{1}{2 \pi i} \int_\Bound \frac{f(z)}{z-\zeta} \id z
\]
and its formal conjugate
\[
(\KCbar f) (\zeta)= \frac{-1}{2 \pi i} \int_\Bound \frac{f(z)}{z^*-\zeta^*} \id z^*.
\]
The kernels $\KC, \KCbar$ clearly define operators from $C^\infty(\Bound, \C)$ to $C^\infty(\Omega, \C)$. 
With these definitions we construct an operator on $C^\infty(\Bound, \C^2)$ by setting
\[
S = \begin{pmatrix}
     \KC & 0 \\ 0 & \KCbar
   \end{pmatrix}.
\]
Actually, $-2 \Trace S \dotprod{\sigma}{n}$ coincides with the Calder\'on projector for the Dirac operator as defined for instance in \cite[Chapter 12]{bookBooss}.

\subsection{The Cauchy kernel on the unit circle}

 On the unit circle $\s$ the operators $\KC$ and $\KCbar $
 are explicit when acting on the standard basis.
 For this reason we will first establish all the necessary properties
 on the disc, $\Omega = \Db$, 
 and then translate them to general domains essentially by using the Riemann Mapping Theorem.

Define the orthonormal basis
$$
e_n(\theta) = (2\pi)^{-1/2} e^{in\theta} \in L^2(\s),
$$ 
in the standard parametrization of $\s$.
An explicit calculation yields,
\begin{equation}
 \KC e_n(\zeta) = \begin{cases}  (2\pi)^{-1/2} \zeta^n, & n \geq 0,\\ 0, & n<0,\end{cases} \label{eq : KC-on-monomials}
\end{equation}

and
\begin{equation*}
 \KCbar e_n(\zeta) = \begin{cases}  0, & n>0,\\(2\pi)^{-1/2} (\zeta^*)^{|n|}, & n \leq 0. 
\end{cases}
\end{equation*}

Furthermore for $L^2$-functions on the unit circle, we will denote the Fourier coefficients
\[
\widehat f (n) = \frac{1}{\sqrt{2 \pi}}\int_0^{2 \pi} f(\theta) e^{-in \theta} \id \theta = \Scalprod{e_n}{f}.
\]
We set for $s\in \R$
\[
\norm{f}_{H^s}^2 = \sum_{n \in \Z}(\abs{n}+1)^{2s} \abs{\widehat f(n)}^2 .
\]

The properties of $\KC$ and $\KCbar $ that we will need are grouped in the following proposition.
\begin{proposition} \label{prop : Properties-KC}
If $\Omega = \Db$ and $\KC$, $\KCbar $ are defined as above, then for all  $s \in [-1/2, 1/2]$ 
\begin{enumerate}[\it i)]
\item  $\KC$ and $\KCbar $ extend  to  bounded operators from $H^{-1/2}(\s)$ to $L^2(\Db)$. \label{prop : KC-bounded}
\item For all $f \in H^{s}(\s)$ we have $\partial_{z^*} \KC f = 0$ and $\partial_{z} \KCbar  f= 0$ 
with derivatives taken in the sense of distributions.\label{prop : KC-holomorphic}
\item  $\Trace \KC$ and $\Trace \KCbar $ extend to bounded operators  on $H^{s}(\s)$
 and  they are self-adjoint  projections onto ${\rm span}\,\{e_n  | n\ge 0\}$ and
 ${\rm span}\,\{e_n  | n\le  0\}$, respectively. \label{prop : trace-KC-bounded}
\item $\Trace \KC+ \Trace \KCbar  = 1 +\Scalprod{e_0}{\cdot}e_0 $ when acting on $H^s(\s)$. \label{prop : trace-KC-complementary} 
 \item  For  $\beta \in C^1(\s)$ and $s = -1/2$ or $s = 0$  the commutators $[\beta, \Trace \KC]$ and $[\beta, \Trace \KCbar ]$ are bounded from $H^{s}(\s)$ to $H^{s+1/2}(\s)$. 
 \label{prop : commutator-w-KC-improves}
\end{enumerate}

\end{proposition}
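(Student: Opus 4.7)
The plan is to reduce all five claims to computations on the Fourier basis $\{e_n\}_{n \in \Z}$, starting from the explicit action of $\KC$ and $\KCbar$ given in \eqref{eq : KC-on-monomials} and the identity displayed immediately after it. Parts (ii)--(iv) are almost immediate from these formulas, while (v) is the genuine technical content of the proposition.

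For (i), I would compute $\|\KC e_n\|_{L^2(\Db)}^2$ directly in polar coordinates; the monomials $\zeta^n$ are $L^2(\Db)$-orthogonal and one finds $\|\KC e_n\|_{L^2(\Db)}^2 = 1/(2n+2)$ for $n \geq 0$, whence $\|\KC f\|_{L^2(\Db)}^2 = \sum_{n \geq 0}\frac{|\widehat f(n)|^2}{2n+2} \leq \tfrac12 \|f\|_{H^{-1/2}}^2$, and analogously for $\KCbar$. For (ii), every $\KC e_n$ is a non-negative power of $\zeta$, hence holomorphic and annihilated by $\partial_{z^*}$; linearity on trigonometric polynomials, together with the bound from (i) and continuity of $\partial_{z^*}$ as a map on distributions, extends the identity to every $f \in H^{-1/2}(\s)$. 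For (iii), restricting $\KC e_n$ to $\s$ gives $\Trace \KC e_n = e_n$ if $n \geq 0$ and $0$ otherwise; this Fourier multiplier is obviously a self-adjoint projection, bounded on every $H^s(\s)$, with the stated range, and similarly for $\Trace \KCbar$. Summing the two, mode $n=0$ is the only one counted twice, yielding (iv).

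The heart of the argument is (v). Since by (iii) the operator $\Trace \KC$ is the projection $P_+$ onto non-negative Fourier modes, the matrix of $[\beta, \Trace \KC]$ in the basis $\{e_n\}$ has entries proportional to $\widehat\beta(n-k)\bigl([n\geq 0] - [k \geq 0]\bigr)$, which vanish unless $n$ and $k$ have strictly opposite signs. My strategy is to prove the stronger assertion that the operator whose matrix is $(|n|+1)^{1/2}([\beta, P_+])_{nk}$ is Hilbert--Schmidt on $\ell^2(\Z)$; this immediately implies boundedness $L^2(\s) \to H^{1/2}(\s)$, which is the case $s=0$. Integration by parts gives $|m \widehat\beta(m)| \leq C\|\beta\|_{C^1}$, hence $\sum_m m^2 |\widehat\beta(m)|^2 \leq C\|\beta\|_{C^1}^2$. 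In the block $n \geq 0,\,k < 0$, the Hilbert--Schmidt sum reads $\sum_{n\geq 0,\, k<0}(n+1)|\widehat\beta(n+|k|)|^2$; reindexing by $m = n+|k|$ collapses this double sum to $\sum_{m \geq 1}\tfrac{m(m+1)}{2}|\widehat\beta(m)|^2$, which is controlled by $\|\beta\|_{C^1}^2$. The opposite block is symmetric, and $\Trace \KCbar = P_-$ is handled identically. Finally, the case $s = -1/2$ follows from $s = 0$ by duality, since the adjoint of $[\beta, P_+]$ is (up to sign) $[\overline\beta, P_+]$, which satisfies the same hypotheses.

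The main obstacle I anticipate is (v): the matrix of $[\beta, P_+]$ does not have absolutely summable rows or columns, so naive Schur-type arguments with simple weights fail, forcing one into the genuinely $\ell^2$-based Hilbert--Schmidt computation above, which crucially exploits the $H^1$-regularity encoded by $\sum_m m^2 |\widehat\beta(m)|^2 < \infty$.
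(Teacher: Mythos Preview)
Your treatment of (i)--(iv) matches the paper's essentially verbatim. For (v) both arguments start from the same observation that the matrix entries of $[\beta,\Trace\KC]$ in the Fourier basis are supported where $n$ and $k$ have opposite signs, but then diverge. The paper handles both $s\in\{-1/2,0\}$ in a single stroke: it applies Cauchy--Schwarz to the Fourier expansion of $[\beta,\Trace\KC]f$ and reduces everything to the pointwise weight inequality
\[
(|n|+1)^{2s+1}(|k|+1)^{-2s}\ \le\ |n|+|k|+1\ =\ |n-k|+1,
\]
valid exactly because either $2s+1=0$ or $-2s=0$ and because $n,k$ have opposite signs; summing then collapses the double sum to $\|\beta\|_{H^1}^2$. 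Your route---Hilbert--Schmidt for $s=0$, then duality via $[\beta,P_+]^*=-[\overline\beta,P_+]$ for $s=-1/2$---is equally valid and in fact yields more (compactness of the commutator), at the cost of splitting into two cases. One small slip to fix: the pointwise bound $|m\widehat\beta(m)|\le C\|\beta\|_{C^1}$ alone does \emph{not} give $\sum_m m^2|\widehat\beta(m)|^2<\infty$; what integration by parts actually provides is the identity $m\widehat\beta(m)=-i\widehat{\beta'}(m)$, after which Parseval yields $\sum_m m^2|\widehat\beta(m)|^2=\|\beta'\|_{L^2}^2\le C\|\beta\|_{C^1}^2$.
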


\begin{proof}
Point \ref{prop : trace-KC-complementary} is a direct consequence of \ref{prop : trace-KC-bounded}.
We will prove the remaining points for $\KC$ only, since the same
ideas apply to $\KCbar$.
Also, it is sufficient to establish these properties for continuous functions, since all statements extend to general elements of $H^s$ by density.

In this setting, point \ref{prop : KC-bounded} follows from \eqref{eq
  : KC-on-monomials}, since $\Scalprod{\zeta^n}{\zeta^k}=\frac{\pi}{n+1}
\delta_{n,k}$ and
\[
\norm{\KC f}_{L^2}^2 = \sum_{n,k \geq 0} \widehat f (n)^* \widehat f (k ) \Scalprod{\KC e_n}{\KC e_k} 
= \sum_{n \geq 0} \frac{1}{2n+2} \abs{\widehat f (n )}^2 
\leq  \norm{f}_{H^{-1/2}}^2.
\]
The proof of \ref{prop : KC-holomorphic} is straightforward.
Using \eqref{eq : KC-on-monomials} again we have that
$$
(\Trace \KC) e_n = \begin{cases}  e_n, & n \geq 0,\\ 0, & n<0,
\end{cases}
$$
which establishes point \ref{prop : trace-KC-bounded}.

To see \ref{prop : commutator-w-KC-improves}, we take $s = -1/2$ or $s = 0$, fix $f \in C^1 (\Bound)$ 
and compute the Fourier coefficients of $[\beta, \Trace \KC] f=
\beta\Trace \KC f-\Trace \KC(\beta f)$,

\[
\sqrt{2 \pi}\bigl([\beta, \Trace \KC] f\bigr)^{\wedge}(n) = \begin{cases}
                                                 \sum_{k \geq 0} \widehat \beta(n-k) \widehat f (k) - \sum_{k \in \Z} \widehat \beta(n-k) \widehat f(k), & n \geq 0, \\
                                                 \sum_{k \geq 0} \widehat \beta(n-k) \widehat f (k), & n < 0. 
                                                \end{cases}
\]
By Cauchy-Schwarz,
\begin{align*}
2 \pi& \abs{\bigl([\beta, \Trace \KC] f\bigr)^{\wedge}(n)}^2 \\
\qquad &\leq 
\begin{cases}
  \sum\limits_{k < 0} \abs{\widehat \beta(n-k)}^2 (\abs{k}+1)^{-2s}  \sum \limits_{k < 0} \abs{\widehat f (k)}^2 (\abs{k}+1)^{2s},  & n \geq 0, \\
  \sum\limits_{k \geq 0} \abs{\widehat \beta(n-k)}^2 (\abs{k}+1)^{-2s}
  \sum\limits_{k \geq 0} \abs{\widehat f (k)}^2 (\abs{k}+1)^{2s}, & n < 0,
\end{cases}\\
\qquad &\leq 
\|f\|_{H^s}^2
\begin{cases}
  \sum_{k < 0} \abs{\widehat \beta(n-k)}^2 (\abs{k}+1)^{-2s},   & n \geq 0 ,\\
  \sum_{k \geq 0} \abs{\widehat \beta(n-k)}^2 (\abs{k}+1)^{-2s}, & n < 0.
\end{cases}
\end{align*}
Therefore,  we obtain
\begin{align*}
 \norm{[\beta, \Trace \KC] f}_{H^{s+1/2}}^2  
	  &= \sum_{n \in \Z} (\abs{n}+1)^{2s+1} \abs{\bigl([\beta, \Trace \KC] f\bigr)^{\wedge}(n)}^2 \\
 	  & \leq  \norm{f}_{H^s}^2 \Bigl( 
 	  \sum_{\substack{n \geq 0 \\ k < 0}} (\abs{n}+1)^{2s+1} (\abs{k}+1)^{-2s} \abs{\widehat \beta(n-k)}^2 \\
	  & \quad + \sum_{\substack{n < 0 \\ k \geq 0}} (\abs{n}+1)^{2s+1} (\abs{k}+1)^{-2s} \abs{\widehat \beta(n-k)}^2 
 	  \Bigr).
\end{align*}
Since either $2s+1 = 0$ or $s = 0$ we get
\[
(\abs{n}+1)^{2s+1} (\abs{k}+1)^{-2s} \leq \abs{n} + \abs{k}+1 =  \abs{n-k}+1,
\]
where the last equality holds since $n $ and $k$ have opposite signs
in the sums we are considering.
This allows us to conclude that
\begin{align*}
 \norm{[\beta, \Trace \KC] f}_{H^{s+1/2}}^2  
  	    \leq & \norm{f}_{H^s}^2 \Bigl( 
 	  \sum_{\substack{n \geq 0 \\ k < 0}} (\abs{n-k}+1) \abs{\widehat \beta(n-k)}^2  \\
	&\qquad \qquad \qquad  + \sum_{\substack{n < 0 \\ k \geq 0}} (\abs{n-k}+1) \abs{\widehat \beta(n-k)}^2 
 	  \Bigr) \\
 	 \leq &  \norm{f}_{H^s}^2 \Bigl( 
 	  \sum_{m  \geq 1} (\abs{m}+1)^{2}  \abs{\widehat \beta(m)}^2 +\sum_{m \leq -1} (\abs{m}+1)^{2} \abs{\widehat \beta(m)}^2 
 	  \Bigr) \\
 	 \leq & \norm{f}_{H^s}^2 \norm{\beta}_{H^1}^2,
\end{align*}
which finishes the proof.
\end{proof}

The following lemma relates the operators $\KC$, $\KCbar$ and $S$ to
our problem at hand.

\begin{lemma} \label{lemma : S-improves-regularity}
Let $\Omega = \Db$ and  assume $v \in \mathcal{K} $. Then $\Trace S (\dotprod{\sigma}{n}\Trace v) \in H^{1/2}(\s, \C^2)$.
\end{lemma}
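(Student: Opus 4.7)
My strategy is to combine a Fourier-theoretic algebraic identity on $\s$ with the Cauchy--Pompeiu representation formula, and then to extend from smooth spinors to arbitrary $v \in \mathcal{K}$ by the density established in Lemma~\ref{lemmasf}.

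The first step is to establish the identities
\begin{equation}\label{planeq1}
z\,\Trace \KC(\bar z f) + \Trace \KCbar(f) = f, \qquad \bar z\,\Trace \KCbar(z f) + \Trace \KC(f) = f,
\end{equation}
valid for every $f \in H^{-1/2}(\s)$. These follow by a direct mode-by-mode check, using Proposition~\ref{prop : Properties-KC}\ref{prop : trace-KC-bounded} (which identifies $\Trace\KC$ and $\Trace\KCbar$ with the Fourier projectors onto $\mathrm{span}\{e_n\}_{n\ge 0}$ and $\mathrm{span}\{e_n\}_{n\le 0}$) together with the fact that multiplication by $z$ (resp.\ $\bar z$) shifts Fourier modes by $+1$ (resp.\ $-1$) on $\s$. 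This identity is really what makes $\dotprod{\sigma}{n}\Trace v = (\bar z\Trace v_2,\,z\Trace v_1)^T$ on $\s$ the precisely right boundary input for $S$.

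Next, for $v \in C^\infty(\overline{\Db},\C^2)$, the Cauchy--Pompeiu formula (applied to $v_1$) and its anti-holomorphic counterpart (applied to $v_2$) give
\begin{align*}
\KC(\Trace v_1)(\zeta) &= v_1(\zeta) + \tfrac{1}{\pi}\!\int_\Db \frac{\partial_{z^*} v_1(z)}{z-\zeta}\,dA(z) =: v_1(\zeta) + G_1(\zeta),\\
\KCbar(\Trace v_2)(\zeta) &= v_2(\zeta) + \tfrac{1}{\pi}\!\int_\Db \frac{\partial_{z} v_2(z)}{z^*-\zeta^*}\,dA(z) =: v_2(\zeta) + G_2(\zeta).
\end{align*}
Crucially, $\partial_{z^*} v_1 = \tfrac{i}{2}(Tv)_2$ and $\partial_z v_2 = \tfrac{i}{2}(Tv)_1$ lie in $L^2(\Db)$, so the standard $L^2(\Db) \to H^1(\Db)$ mapping property of the volume Cauchy transform (a consequence of the $L^2$-boundedness of the Beurling transform) yields $G_1, G_2 \in H^1(\Db)$ with $\|G_i\|_{H^1} \le C\|Tv\|_{L^2}$, hence $\Trace G_1,\Trace G_2 \in H^{1/2}(\s)$. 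Inserting $f = \Trace v_2$ (resp.\ $f = \Trace v_1$) in \eqref{planeq1} and substituting these Cauchy--Pompeiu identities then gives, for smooth $v$,
\[
\Trace \KC(\bar z \Trace v_2) = -\bar z\,\Trace G_2, \qquad \Trace \KCbar(z \Trace v_1) = -z\,\Trace G_1,
\]
both in $H^{1/2}(\s)$.

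To finish, for general $v \in \mathcal{K}$ I would pick $v_n \in C^\infty(\overline{\Db},\C^2)$ with $v_n \to v$ in $\mathcal{K}$ (Lemma~\ref{lemmasf}). Then $Tv_n \to Tv$ in $L^2(\Db)$ forces $G_{i,n} \to G_i$ in $H^1(\Db)$ and $\Trace G_{i,n} \to \Trace G_i$ in $H^{1/2}(\s)$, while $\Trace v_n \to \Trace v$ in $H^{-1/2}(\s)$ by Lemma~\ref{lemma : H1/2trace}. The boundedness of $\Trace\KC$ and $\Trace\KCbar$ on $H^{-1/2}(\s)$ provided by Proposition~\ref{prop : Properties-KC}\ref{prop : trace-KC-bounded} lets one pass to the limit in the displayed identities, yielding the lemma. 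The only real obstacle is spotting the identities \eqref{planeq1}; once they are in hand, the rest is routine invocation of the Cauchy transform.
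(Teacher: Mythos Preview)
Your argument is correct but takes a genuinely different route from the paper's. The paper proceeds by duality: for a smooth test spinor $f$ on $\s$ and a $C^\infty(\overline{\Db})$-approximation $v_n$ of $v$, it uses the self-adjointness of $\Trace S$ on $L^2(\s)$ (Proposition~\ref{prop : Properties-KC}\ref{prop : trace-KC-bounded}) together with the integration-by-parts identity \eqref{eq : ibp} to obtain
\[
\int_{\s}\inprodS{\Trace S(\dotprod{\sigma}{n}\Trace v_n)}{f}
=-i\Scalprod{Tv_n}{Sf}+i\Scalprod{v_n}{TSf},
\]
and then observes that $TSf=0$ by Proposition~\ref{prop : Properties-KC}\ref{prop : KC-holomorphic}. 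The bound $\|Sf\|_{L^2(\Db)}\le C\|f\|_{H^{-1/2}(\s)}$ from Proposition~\ref{prop : Properties-KC}\ref{prop : KC-bounded} then gives the $H^{1/2}$-conclusion by duality after passing to the limit. No Cauchy--Pompeiu formula or area Cauchy transform enters.

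Your approach is instead constructive: the Fourier identities \eqref{planeq1} combined with Cauchy--Pompeiu produce the explicit formula $\Trace S(\dotprod{\sigma}{n}\Trace v)=(-\bar z\,\Trace G_2,\,-z\,\Trace G_1)^T$, where the $G_i$ are area Cauchy transforms of the components of $Tv$. This yields more information---an actual representation of the boundary object, with an explicit $H^{1/2}$-bound by $\|Tv\|_{L^2}$---at the cost of importing the $L^2(\Db)\to H^1(\Db)$ mapping property of the area Cauchy transform, which the paper avoids. The paper's duality proof is shorter and entirely self-contained within the properties already recorded in Proposition~\ref{prop : Properties-KC}; your proof trades economy for explicitness.
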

\begin{proof}
  Take a test function $f \in C^\infty(\s, \C^2)$ and a sequence $(v_n) \subset C^1(\overline{\Db}, \C^2)$ approaching $v$ in $\mathcal{K}$. 
 By Proposition  \ref{prop : Properties-KC} \ref{prop : trace-KC-bounded},
 $\Trace S$ is self-adjoint, thus using \eqref{eq : ibp} 
 \begin{align*}
   \int_{\s} \inprodS{\Trace S (\dotprod{\sigma}{n} \Trace v_n)}{f} &  = \int_{\s} \inprodS{ \Trace v_n}{ \dotprod{\sigma}{n}\Trace S f} \\
   & = -i \Scalprod{T v_n}{S f} + i \Scalprod{ v_n}{T S f} .
 \end{align*}
The last term  above cancels since, by Proposition \ref{prop :
  Properties-KC} \ref{prop : KC-holomorphic}, $T S f = 0$.
Thus, in view of Proposition \ref{prop : Properties-KC} \ref{prop :
  KC-bounded} 
\begin{align*}
\big|{ \int_{\s} \inprodS{\Trace S (\dotprod{\sigma}{n} \Trace v_n)}{f}}\big| &\leq \norm{T v_n}_{L^2(\Db)} \norm{S f}_{L^2(\Db)} \\
&\leq C_K \norm{T v_n}_{L^2(\Db)} \norm{ f}_{H^{-1/2}(\s)}.
\end{align*}
Taking the limit as $n \to \infty$ on both sides we see that $\Trace S (\dotprod{\sigma}{n}\Trace v)$ 
extends to a continuous functional on $H^{-1/2}$, and thus can be identified with a function in $H^{1/2}$.
\end{proof}
The next lemma allows us to conclude the proof of self-adjointness
when $\Omega = \Db$, see Remark \ref{lisa}. 
 \begin{lemma} \label{lemma : bootstrap}
Let $\Omega = \Db$ and  $\beta  $ be a nowhere vanishing $C^1(\s)$-function. 
Assume that $v \equiv \left(\begin{smallmatrix}   v_1 \\
    v_2   \end{smallmatrix}\right)  \in \mathcal{K}$ and that $\Trace
v_1 = \beta \Trace v_2$ as an equality in $H^{-1/2}(\s)$.
Then $\Trace v \in H^{1/2} (\s,\C^2)$.
\end{lemma}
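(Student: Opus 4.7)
The plan is to combine Lemma \ref{lemma : S-improves-regularity} with the boundary relation $\Trace v_1 = \beta \Trace v_2$ and the commutator estimate of Proposition \ref{prop : Properties-KC} \ref{prop : commutator-w-KC-improves}, running a two-step bootstrap that lifts the regularity of $\Trace v$ from $H^{-1/2}$ through $L^2$ to $H^{1/2}$.

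I would first unpack the consequences of Lemma \ref{lemma : S-improves-regularity} in scalar form. Set $w_i := \Trace v_i$, so that Lemma \ref{lemma : H1/2trace} gives $w_1 \in H^{-1/2}(\s)$, and write $w_2 = \alpha w_1$ with $\alpha := \beta^{-1}$, which is again $C^1$ and nowhere vanishing. Identifying the outward unit normal at $\zeta \in \s$ with the complex number $n$ (so $n\overline{n}=1$), a direct computation yields $\dotprod{\sigma}{n} = \left(\begin{smallmatrix} 0 & \overline{n} \\ n & 0 \end{smallmatrix}\right)$, so Lemma \ref{lemma : S-improves-regularity} reads
\[
\Trace \KC(\overline{n} w_2) \in H^{1/2}(\s), \qquad \Trace \KCbar(n w_1) \in H^{1/2}(\s).
\]
Setting $Q_+ := \Trace \KC$ and $Q_- := I - Q_+$, and using $\Trace \KCbar = Q_- + \Scalprod{e_0}{\cdot}e_0$ from Proposition \ref{prop : Properties-KC} \ref{prop : trace-KC-complementary}, this becomes
\[
Q_+(\overline{n}\alpha\, w_1) \in H^{1/2}(\s), \qquad Q_-(n w_1) \in H^{1/2}(\s).
\]

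The bootstrap step is then the following. Assuming $w_1 \in H^s(\s)$ with $s \in \{-1/2, 0\}$, I move the $C^1$ multiplier $\overline{n}\alpha$ past $Q_+$,
\[
\overline{n}\alpha\, Q_+ w_1 = Q_+(\overline{n}\alpha\, w_1) - [Q_+, \overline{n}\alpha] w_1,
\]
and invoke Proposition \ref{prop : Properties-KC} \ref{prop : commutator-w-KC-improves} to put the commutator into $H^{s+1/2}(\s)$; since multiplication by a nonvanishing $C^1$ function is a bounded isomorphism of both $L^2(\s)$ and $H^{1/2}(\s)$, dividing by $\overline{n}\alpha$ gives $Q_+ w_1 \in H^{\min(1/2,\,s+1/2)}(\s)$. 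For the other half, using $\overline{n}\cdot n = 1$,
\[
Q_- w_1 = \overline{n}\, Q_-(n w_1) - [\overline{n}, Q_-](n w_1)
\]
lies in $H^{\min(1/2,\,s+1/2)}(\s)$ by the same commutator bound applied to $n w_1 \in H^s(\s)$. Summing, $w_1 \in H^{\min(1/2,\,s+1/2)}(\s)$. One application with $s = -1/2$ upgrades $H^{-1/2}$ to $L^2$; a second with $s = 0$ upgrades to $H^{1/2}$; and then $w_2 = \alpha w_1 \in H^{1/2}(\s)$ by $C^1$-multiplier boundedness. I expect the main subtlety to be simply that the commutator estimate of Proposition \ref{prop : Properties-KC} \ref{prop : commutator-w-KC-improves} is available only at the two exponents $s \in \{-1/2, 0\}$, but this is exactly what the two-step bootstrap consumes, landing precisely at the $H^{1/2}$ regularity required by Lemma \ref{lemma : boundary->bulk}.
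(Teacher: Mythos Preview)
Your proposal is correct and follows essentially the same approach as the paper: both arguments combine Lemma~\ref{lemma : S-improves-regularity}, the complementarity in Proposition~\ref{prop : Properties-KC}~\ref{prop : trace-KC-complementary}, and the commutator estimate of Proposition~\ref{prop : Properties-KC}~\ref{prop : commutator-w-KC-improves} in a two-step bootstrap $H^{-1/2}\to L^2\to H^{1/2}$. The only cosmetic difference is that the paper packages things via $f=\dotprod{\sigma}{n}\Trace v$ and a single commutator with $\tilde\beta=n^2\beta$, whereas you track $w_1=\Trace v_1$ directly and use two commutators (one with $\overline{n}\alpha$ for the $Q_+$ part, one with $\overline{n}$ for the $Q_-$ part); the underlying mechanism is identical.
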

\begin{remark}\label{lisa}
  In view of Lemma \ref{lemma : H1/2trace}, $v \in \Dom{D^*}$
  satisfies the hypotheses of Lemma \ref{lemma : bootstrap} with
  $\beta = \frac{t^* \cos \eta }{1- \sin \eta} $.  Thus, according to
  Lemma \ref{lemma : boundary->bulk}, $v\in H^1(\Db,\C^2)$ satisfies
  the boundary conditions. In particular, $\Dom{D^*}\subset \Dom{D}$.
\end{remark}
\begin{proof}
Let us write 
\begin{align*}
  \dotprod{\sigma}{n}=\begin{pmatrix}
0&n^*\\
n&0
\end{pmatrix}.
\end{align*}
In order to apply Lemma \ref{lemma : S-improves-regularity}, we define the spinor $f=
\dotprod{\sigma}{n} \gamma v$. Due to the boundary condition we have
that $f_2 = \tilde \beta f_1$ where $\tilde \beta = (n)^2 \beta$ is a $C^1$-function.
 In this notation Lemma \ref{lemma : S-improves-regularity} states that
 \begin{align}\label{eq:f12}
 \Trace \KC f_1 \in H^{1/2}, \quad \Trace \KCbar f_2\in H^{1/2}.
 \end{align}
Now we write
 \begin{align}\label{eq:matilde}
 \Trace \KC f_2 = \Trace \KC (\tilde \beta f_1) =  \tilde \beta \Trace\KC
 f_1 
 - [ \tilde \beta, \Trace \KC] f_1.
 \end{align}
 Clearly $\tilde \beta \Trace\KC f_1 $ is in $H^{1/ 2}$. By
 Proposition \ref{prop : Properties-KC} \ref{prop :
   commutator-w-KC-improves}, the term with the commutator is in
 $L^2$, so $\Trace \KC f_2 \in L ^2$ as well. This together with
 \eqref{eq:f12} gives that $f_2\in L^2$, in view of Proposition
 \ref{prop : Properties-KC} \ref{prop : trace-KC-complementary}.
 Since $\tilde \beta$ does not vanish, $f_1$ is also in $L ^2$ due to
 the boundary conditions.  With this improved regularity we return to
 \eqref{eq:matilde} and observe that, due to Proposition \ref{prop :
   Properties-KC} \ref{prop : commutator-w-KC-improves},  $[ \tilde \beta, \Trace \KC] f_1$
 is in $H^{1/2}$ so the same holds for $\Trace \KC f_2$.  Again using the complementarity of the projections and
 the fact $\tilde \beta $ does not vanish, we conclude $f_1, f_2 \in
 H^{1/2}$.
\end{proof}
\subsection{Riemann mapping and the proof of Theorem \ref{thm : self-adjoint} }

We first give the proof in the case where $\Omega$ is simply connected. 
The case of multiply connected domains will be treated  at the end of
this section.
Since $\Bound$ is $C^2$, there exists a $C^1$ conformal mapping (up to the boundary) $F: \overline{\Omega} \rightarrow \overline{\Db}$ 
with inverse $G$ \cite[Theorem 3.5, p. 48]{Pommerenkebook}.
Consider the map $U$ defined by $(Uf)(z):= f(G(z))$ mapping functions on $\overline{\Omega}$ to functions on $\overline{\Db}$.
By restriction (and abuse of notation), $U$ also maps functions on $\Bound$ to functions on $\s$.
\begin{lemma}\label{lem:Conformal}
When $\Omega$ is simply connected and has $C^2$-boundary, the map $U$ defines a bounded bijection from $L^2(\Omega)$ to $L^2(\Db)$ with bounded inverse. 
Furthermore,
$U: H^s(\Omega) \rightarrow H^s(\Db)$ is bounded with bounded inverse, for all $s \in [-1,1]$.

Similarly, $U: H^s(\Bound) \rightarrow H^s(\s)$ is bounded with bounded inverse, for all $s \in [-1,1]$.

Finally, if $ v\in \Dom{\D^*}$, then $Uv = (u_1, u_2) \in \mathcal{K}(\Db)$  
and on the boundary
$\Trace u_1 = \beta  \Trace u_2$ as an identity in $H^{-1/2}(\s)$, 
where $\beta = U( \frac{t^* \cos(\eta)}{1-\sin(\eta)} )$ is $C^1(\s)$.
\end{lemma}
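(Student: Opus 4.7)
The plan is to exploit the $C^1$-regularity of the Riemann map up to the boundary: by Kellogg--Warschawski (cited via \cite[Theorem 3.5]{Pommerenkebook}), $G$ extends to a $C^1$-diffeomorphism $\overline{\Db}\to\overline{\Omega}$ with $G'$ continuous on $\overline{\Db}$ and bounded away from $0$ and $\infty$, and the same holds for $F$. All four claims then reduce to standard change-of-variables, chain-rule, interpolation, and duality arguments, except for the Dirac-structure claim which additionally uses the holomorphicity of $G$.

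For the bulk $L^2$ equivalence, a change of variables gives $\|Uf\|_{L^2(\Db)}^2 = \int_\Omega |f(w)|^2 |F'(w)|^2\, dw$, hence $\|Uf\|_{L^2(\Db)} \asymp \|f\|_{L^2(\Omega)}$, with the analogous bound for $U^{-1}g = g\circ F$. The $H^1$ equivalence then follows from the chain rule together with the uniform bounds on $DG$ and $DF$. The intermediate range $s\in(0,1)$ is handled by complex interpolation, and $s\in[-1,0)$ by duality. The boundary Sobolev estimates are obtained in the same way, using that $G|_\s:\s\to\Bound$ is a $C^1$-diffeomorphism with bounded nonvanishing derivative. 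Since $\cos\eta\neq 0$ forces $\sin\eta\neq 1$, the function $\frac{t^*\cos\eta}{1-\sin\eta}$ lies in $C^1(\Bound)$, so pullback by $G|_\s$ places $\beta$ in $C^1(\s)$.

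For the Dirac-structure assertion the holomorphicity of $G$ is the crucial input. Since $\partial_{z^*}G = 0$ and $\partial_z G = G'$, the chain rule applied componentwise to $v$ yields
\begin{align*}
T(Uv)(z) \;=\; -2i\begin{pmatrix} G'(z)\,(\partial_w v_2)(G(z)) \\ \overline{G'(z)}\,(\partial_{w^*} v_1)(G(z)) \end{pmatrix},
\end{align*}
and a second change of variables with Jacobian $|G'(z)|^{-2}$ produces the exact identity $\|T(Uv)\|_{L^2(\Db)} = \|Tv\|_{L^2(\Omega)}$; hence $Uv \in \mathcal{K}(\Db)$. The boundary condition from Lemma~\ref{lemma : H1/2trace} rewrites as $\Trace v_1 = \frac{t^*\cos\eta}{1-\sin\eta}\,\Trace v_2$ in $H^{-1/2}(\Bound)$, and applying $U$, which commutes with the trace and with pointwise multiplication, yields $\Trace u_1 = \beta\,\Trace u_2$ in $H^{-1/2}(\s)$.

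The main technical point is justifying the displayed distributional identity for $T(Uv)$ when $v$ is only assumed to satisfy $v,Tv\in L^2(\Omega)$. I would first verify the identity pointwise for $v\in C^\infty(\overline{\Omega})$, then use Lemma~\ref{lemmasf} to approximate an arbitrary $v\in\mathcal{K}(\Omega)$ in the graph-norm by such smooth spinors, and finally pass to the limit using the $L^2$-boundedness of $U$ and $U^{-1}$ already established in the bulk step.
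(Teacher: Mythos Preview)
Your proposal is correct and follows essentially the same approach as the paper: change of variables for $L^2$, chain rule for $H^1$, interpolation and duality for intermediate $s$, the same argument restricted to the boundary, and the holomorphicity of $G$ to transport the $\mathcal{K}$-structure and boundary relation. You supply more detail than the paper in two places---the exact identity $\|T(Uv)\|_{L^2(\Db)}=\|Tv\|_{L^2(\Omega)}$ via the Jacobian cancellation, and the density argument through Lemma~\ref{lemmasf} to justify the distributional chain rule---but the underlying strategy is identical.
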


\begin{proof}
Since $F, G$ have bounded derivatives on $\overline{\Omega}$ (resp $\overline{\Db}$) the map
$L^2(\Omega) \ni v \mapsto u := Uv = v \circ G \in L^2(\Omega)$ is a bounded bijection with bounded inverse. 
By direct differentiation one verifies that $U$ is also bounded from $H^1(\Omega)$ to $H^1(\Db)$ with bounded inverse. 
By interpolation and duality one finds that also $U: H^s(\Omega) \rightarrow H^s(\Db)$ is bounded with bounded inverse, for all $s \in [-1,1]$.

The same argument as in the interior applies on the boundary, so we
see that $U : H^s(\Bound) \rightarrow H^s(\s)$ is bounded with bounded inverse, 
for all $s \in [-1,1]$.

Suppose now that $v = (v_1,v_2) \in \Dom{\D^*}$. Then, since $\partial_{z^*} G = 0$, we have by the chain rule,
$$
\partial_{z} u_2 = G' \partial_{z} v_2  \in L^2(\Db), \qquad \partial_{z^*} u_1 = (G')^* \partial_{z^*} v_1  \in L^2(\Db).
$$
Finally, the boundary condition $\Trace u_1 = \beta  \Trace u_2$ follows from the boundary condition satisfied by $v$, see Lemma~\ref{lemma : H1/2trace}.
\end{proof}

Now we can conclude the proof of the self-adjointness of $\D$. 

\begin{proof}[Proof of Theorem \ref{thm : self-adjoint}]
{\it Simply connected case.}
Fix $v \in \Dom{D^*}$.
By Lemmas \ref{lemma : H1/2trace} and \ref{lemma : boundary->bulk}, we
only have to prove that $v$ has a well-defined trace in
$H^{1/2}(\Bound)$. By Lemma \ref{lem:Conformal}, this is  equivalent to
showing that $\Trace u := \Trace Uv \in H^{1/2}(\s)$, where
$U$ is the map defined above. 
By the same lemma, $u \in \mathcal{K}$ and its components $u_1, u_2$ satisfy the boundary condition
$\Trace u_1 = \beta \Trace u_2$, with $\beta= U( \frac{t^* \cos(\eta)}{1-\sin(\eta)} )$. 
Since $\beta$ vanishes nowhere by assumption, we can apply
Lemma~\ref{lemma : bootstrap} and conclude the proof of the theorem in
this case.\\
\noindent
{\it Multiply connected case.} 
It clearly suffices to consider connected $\Omega$.
Suppose that $\Bound$ is made up of the simple, regular curves
$\Gamma_0, \ldots, \Gamma_n$, with $n \geq 1$. Let $\Omega_j$ be the
interior  components of $\R^2 \setminus \Gamma_j$ (given
by the Jordan Curve Theorem). Let $\Gamma_0$ be the exterior boundary.
Since $\Omega$ is connected, $\Omega \subset \Omega_0$ and $\Omega \subset \R^2\setminus\overline{\Omega}_j$ for $j\geq 1$.

Let first $F_0: \Omega_0 \rightarrow \Db$ be the conformal (Riemann) map and let $U_0: L^2(\Omega) \rightarrow L^2(F_0(\Omega))$ be the push-forward map as in Lemma~\ref{lem:Conformal}. Proceeding as in the proof in the simply connected case, using $U_0$ instead of $U$, one concludes the desired $H^{1/2}$-regularity on the boundary component $\Gamma_0$.

For $j \in \{1,\ldots,n\}$ let $z_j \in \Omega_j$.
To obtain the $H^{1/2}$-regularity on the boundary component
$\Gamma_j$, one first applies the fractional transformation $I_j(z) =
(z-z_j)^{-1}$. After this transformation, $I_j(\Gamma_j)$ is the
external boundary of $I_j(\Omega)$ and one can proceed as in the
previous case. Notice that since $z_j \in \Omega_j$, the map $I_j$
(and its inverse) has bounded derivatives to all orders in $\Omega$
and therefore preserves Sobolev spaces in a similar manner to
Lemma~\ref{lem:Conformal}. This finishes the proof of Theorem \ref{thm : self-adjoint}.
\end{proof}
\noindent
{\bf Acknowledgments.}
The authors thank Martin Chuaqui for interesting discussions.
This work has  been
supported by the Iniciativa Cient\'ifica Milenio (Chile) through the
Millenium Nucleus RC–120002 ``F\'isica Matem\'atica'' . 
R.B.  has been supported by Fondecyt (Chile) Projects \# 116--0856 and
\#114--1155. S.F. acknowledges partial support from a Sapere Aude grant
from the Danish Councils for Independent Research, Grant number
DFF--4181-00221. E.S has been partially funded by Fondecyt (Chile)
project \# 114--1008. H. VDB. acknowledges support from Conicyt
(Chile) through CONICYT--PCHA/Doctorado Nacional/2014.  This work was
carried out while S.F. was invited professor at Pontificia Universidad
Cat\'olica de Chile.
\begin{appendix}

  \section{Construction of a Weyl sequence} \label{appendix2} In this
  appendix, we construct a singular Weyl sequence for $D_{\eta}$ at
  $0$ in the case where $\cos \eta$ vanishes to second order at a
  point of $\Bound$. This shows that if $D_{\eta}$ has a self-adjoint
  realisation then $0$ is in its essential spectrum. Therefore, by the
  compact embedding of $H^1$ in $L^2$, the domain of such a
  realisation cannot be included in $H^1$.

We will assume for definiteness that $\eta $ tends quadratically to $\frac{\pi}{2}$ at a point of the boundary. 
Then, the boundary conditions can be written $\Trace u_2 = B t \Trace u_1$, where $B =
(1 - \sin \eta)/ \cos \eta$.  Our assumption implies then that $\abs{B(s)} \leq
C (s-s_0)^2 $ and that $\abs{\dotprod{t}{}\nabla B(s)} \leq C \abs{s-s_0}$ for some $s_0
\in \Bound$.  We identify $\R^2$ with $\C$ and assume for definiteness
that $s_0 = 0$ and that $t(0) = i$.  Then, we can find $R_0 > 0$ such
that
\begin{equation}
 \Omega \cap \{re^{i \phi} | 0 \leq r \leq R_0, \abs{\phi} \leq \pi/4\} = \varnothing. \label{eq : outer_cone}
\end{equation}
Taking a smaller $R_0$ if necessary, we may also assume that 
$B$ and $t$ can be extended to $C^1$-functions on $\overline{\Omega} \cap B(0, R_0)$,
such that 
\[
\frac{\abs{B(z)}}{\abs{z}} + \abs{\nabla B (z)} \leq C_B \abs{z},
\quad \abs{t} + \abs{\nabla t} \leq C_t, \text{ for all } z \in
\overline{\Omega} \cap B(0, R_0),
\]
where $C_B$ and $C_t$ are positive constants.
This is always possible since such constants exist for $z \in \Bound$ and $\Omega$ has a $C^2$-boundary.
We also fix a cutoff function $\chi \in C^\infty(\R, [0,1])$ such that $\chi(x) = 1 $ for $x \leq 1/2$, $\chi(x) = 0$ for $x \geq 1$
and $\abs{\chi'} \leq 3 $.
For $R \geq 0$, define $\chi_R(z) = \chi(\abs{z}/R)$.

Now for $n \geq 1$, we set
\[
u_n(z) = (z-s_n)^{-n}\begin{pmatrix}
         1 \\ B t
        \end{pmatrix}
\]
for some $s_n > 0$.
Define $v_n := \chi_{R_n} u_n$. Notice that $v_n\in \Dom{D_\eta}$ for
all $R_n \leq R_0$ and we have that 
\[
\norm{v_n}  \geq \norm{\chi_{R_n}(z-s_n)^{-n}}.
\]
On the other hand,
\begin{align*}
 \norm{T v_n} &\leq \norm{\abs{\nabla \chi_{R_n}} u_n} + 2 \norm{\chi_{R_n}\begin{pmatrix}
                                                                                \partial_{z} B t(z-s_n)^{-n} \\
                                                                                \partial_{z^*} (z-s_n)^{-n}
                                                                               \end{pmatrix}}  \\
               &\leq \frac{3}{R_n} \norm{\mathbf{1}_{[R_n/2, R_n]}(|z|) u_n}
			 + 2 C_B C_t R_n
                         \norm{\chi_{R_n}(z-s_n)^{-n}}\\
&\quad +2 n C_t \norm{\chi_{R_n} B (z-s_n)^{-n-1}},
			 \end{align*}
where $\mathbf{1}_I$ is the indicator function on an intervall $I\subset\R$.
The last term can be estimated further by observing that, within $\supp{\chi_{R_n}} \cap \Omega$,
\[
\frac{\abs{B}}{\abs{z-s_n}} \leq C_B \frac{\abs{z}^2}{\abs{z-s_n}} \leq C_B R_n \sqrt{2},
\]
where the last inequality holds in view of \eqref{eq : outer_cone}.
Thus, we obtain 
\[
\frac{\norm{T v_n}}{ \norm{v_n}} \leq \frac{3}{R_n} \frac{\norm{\mathbf{1}_{[R_n/2, R_n]}(|z|) u_n}}{\norm{\chi_{R_n}(z-s_n)^{-n}}}
	+ C_B C_t (2 + \sqrt{2} n ) R_n. 
\]
We now fix $R_n \leq R_0$ such that the second term in the above equation is smaller than $1/ 2 n$.
In the first term, we note that, as $s_n \searrow 0$ for a fixed $R_n$, the numerator stays bounded while the 
denominator increases to $+ \infty$. Thus, by choosing a sufficiently small $s_n$,
we obtain
\[
\frac{\norm{T v_n}}{ \norm{v_n}}\leq \frac{1}{n}.
\]
In addition, the sequence $v_n/ \norm{v_n}$ converges weakly to zero, so it is a singular Weyl sequence, which proves $0 \in \sigma_{\rm ess}(\D_\eta)$. 

\end{appendix}

\providecommand{\bysame}{\leavevmode\hbox to3em{\hrulefill}\thinspace}
\providecommand{\MR}{\relax\ifhmode\unskip\space\fi MR }
\providecommand{\MRhref}[2]{%
  \href{http://www.ams.org/mathscinet-getitem?mr=#1}{#2}
}
\providecommand{\href}[2]{#2}


\begin{thebibliography}{10}

\bibitem{bookAdams}
R.~A. Adams, \emph{Sobolev spaces}, Academic Press, New York-London, 1975, Pure
  and Applied Mathematics, Vol. 65.

\bibitem{AkhmerovBeenakker}
A.~R. Akhmerov and C.~W.~J. Beenakker, \emph{Boundary conditions for {D}irac
  fermions on a terminated honeycomb lattice}, Phys. Rev. B \textbf{77} (2008),
  085423.

\bibitem{beneventano}
C.~G. Beneventano, I.~Fialkovsky, E.~M. Santangelo, and D.~V. Vassilevich,
  \emph{Charge density and conductivity of disordered berry-mondragon graphene
  nanoribbons}, The European Physical Journal B \textbf{87} (2014), no.~3,
  1--9.

\bibitem{BerryMondragon}
M.~V. Berry and R.~J. Mondragon, \emph{Neutrino billiards: time-reversal
  symmetry-breaking without magnetic fields}, Proc. Roy. Soc. London Ser. A
  \textbf{412} (1987), no.~1842, 53--74.

\bibitem{BoossLeschZhu}
B.~Boo{\ss}-Bavnbek, M.~Lesch, and Ch. Zhu, \emph{The {C}alder\'on projection:
  new definition and applications}, J. Geom. Phys. \textbf{59} (2009), no.~7,
  784--826.

\bibitem{bookBooss}
B.~Boo{\ss}-Bavnbek and K.~P. Wojciechowski, \emph{Elliptic boundary problems
  for {D}irac operators}, Mathematics: Theory \& Applications, Birkh\"auser
  Boston, Inc., Boston, MA, 1993.

\bibitem{brezisbook}
H.~Brezis, \emph{Analyse fonctionnelle}, Collection Math\'ematiques
  Appliqu\'ees pour la Ma\^\i trise., Masson, Paris, 1983, Th{\'e}orie et
  applications. [Theory and applications].

\bibitem{CastroNetoetAl}
A.~H. Castro~Neto, F.~Guinea, N.~M.~R. Peres, K.~S. Novoselov, and A.~K. Geim,
  \emph{The electronic properties of graphene}, Rev. Mod. Phys. \textbf{81}
  (2009), 109--162.


\bibitem{freitassiegl2014}
P.~Freitas and P.~Siegl, \emph{Spectra of graphene nanoribbons with armchair
  and zigzag boundary conditions}, Rev. Math. Phys. \textbf{26} (2014), no.~10,
  1450018, 32.

\bibitem{Pommerenkebook}
Ch. Pommerenke, \emph{Boundary behaviour of conformal maps}, Grundlehren der
  mathematischen Wissenschaften. [A Series of Comprehensive Studies in
  Mathematics], Springer-Verlag, Berlin Heidelberg New York, 1991.

\bibitem{Prokhorova2013}
M.~Prokhorova, \emph{The spectral flow for {D}irac operators on compact planar
  domains with local boundary conditions}, Comm. Math. Phys. \textbf{322}
  (2013), no.~2, 385--414.

\bibitem{schmidt1994}
K.~M. Schmidt, \emph{A remark on boundary value problems for the {D}irac
  operator}, Quart. J. Math. Oxford Ser. (2) \textbf{46} (1995), no.~184,
  509--516.

\end{thebibliography}
\end{document}